\let\originalleft\left
\let\originalright\right
\renewcommand{\left}{\mathopen{}\mathclose\bgroup\originalleft}
\renewcommand{\right}{\aftergroup\egroup\originalright}
\newcommand{\ket}[1]{| #1 \rangle}
\newcommand{\norm}[1]{\|#1\|}
\newcommand{\abs}[1]{\left|#1\right|}
\newcommand{\de}[1]{\left(#1\right)}
\newcommand{\mathand}{\quad\text{and}\quad}
\newcommand{\be}{\begin{equation}}
\newcommand{\ee}{\end{equation}}
\newcommand{\eg}{e.g.\@\xspace}
\newtheorem{theorem}{Theorem}
\newtheorem{lemma}[theorem]{Lemma}
\begin{document}
\title{Probability in two deterministic universes}
\author{Mateus Araújo\\
\textit{\small Institute for Theoretical Physics, University of Cologne, Zülpicher Straße 77,}\\\textit{\small50937 Cologne, Germany}}
\date{\today}
\maketitle
\begin{abstract}
How can probabilities make sense in a deterministic many-worlds theory? We address two facets of this problem: why should rational agents assign subjective probabilities to branching events, and why should branching events happen with relative frequencies matching their objective probabilities. 

To address the first question, we generalise the Deutsch-Wallace theorem to a wide class of many-world theories, and show that the subjective probabilities are given by a norm that depends on the dynamics of the theory: the $2$-norm in the usual Many-Worlds interpretation of quantum mechanics, and the $1$-norm in a classical many-worlds theory known as Kent's universe. To address the second question, we show that if one takes the objective probability of an event to be the proportion of worlds in which this event is realised, then in most worlds the relative frequencies will approximate well the objective probabilities. This suggests that the task of determining the objective probabilities in a many-worlds theory reduces to the task of determining how to assign a measure to the worlds. 
\end{abstract}

We are used to think of a probabilistic situation as one where either an event $E$ or an event $\neg E$ happens, with probabilities $p$ and $1-p$. In many-world theories, however, probabilistic situations are treated as deterministic branching situations, where some worlds are created where event $E$ happens, and some worlds are created where event $\neg E$ happens. Does it still make sense to assign probabilities $p$ and $1-p$ to the worlds with events $E$ and $\neg E$?

This question was already raised at the inception of the Many-Worlds interpretation by Everett in 1957 \cite{everett57}. Several early attempts were made to understand probability from a frequentist point of view \cite{finkelstein63,hartle68,dewitt70,graham73,farhi89} that were mathematically mistaken \cite{squires90,caves04b}. Progress had to wait until 1999, when Deutsch proposed a proof of the Born rule from decision-theoretical assumptions, that was subsequently clarified and improved upon by Wallace \cite{deutsch99,wallace03,wallace07,wallace09}. In Deutsch and Wallace's proofs probabilities are understood as tools rational agents use to make decisions about branching situations, in analogy to single-world decision theory. 

Other successful derivations of the Born rule in the Many-Worlds interpretation have been proposed since then. Żurek proposed a particularly simple one, that however does not address the question of what probabilities should mean in a many-worlds theory \cite{zurek03}. More recently Vaidman, Carroll, and Sebens \cite{vaidman11,carroll14} provided proofs that do offer such an explanation: in their argument probabilities are a measure of the lack of knowledge an agent has about in which world they are, at the instant of time after they performed a measurement but before they learned the outcome. This explanation is not satisfactory, however, as probabilities must also make sense before a measurement is performed, where no uncertainty arises.


In this paper, we investigate the probability problem by focussing on general many-world theories that have a measurement-like branching situation, where a finite number of possible outcomes can occur, and one set of worlds is created with each outcome. Each set of worlds has a coefficient associated to them, with the only constraint that if the coefficient is zero, then no worlds exist with that outcome. 

Two concrete many-world theories that fit into this framework are the usual Many-Worlds interpretation, where the coefficients are the usual complex amplitudes, and a classical many-worlds theory introduced by Adrian Kent in Ref.~\cite{kent09}. There, agents live in a deterministic computer simulation, and the coefficients are non-negative integers that literally indicate the number of copies of the simulation that are run with each outcome after branching. We generalise the Deutsch-Wallace theorem to these many-world theories, and show that it implies, in Kent's universe, that the subjective probabilities must come from the $1$-norm of the coefficients, and in Many-Worlds from the $2$-norm.

The decision-theoretical approach has been criticized by several authors   \cite{gill03,baker06,kent09,albert10,dawid13,mandolesi15,mandolesi18}. Besides objections against specific aspects of Deutsch's and Wallace's proofs that do not apply to the current work, the criticism focuses on three main points: (i) the claim that to derive the branching structure of Many-Worlds via decoherence one needs to assume the Born rule in the first place, which would make the Deutsch-Wallace argument circular, (ii) the observation that there are irrational agents who do not use probabilities to make their decisions, and (iii) the claim that it is incoherent to use decision theory to derive probabilities that are actually objective. With regards to (i), it is not true that the derivation of the branching structure depends on the Born rule. It depends only on being able to say that quantum states that are close in some metric give rise to similar physics, which is standard scientific practice \cite{wallace12}. In any case, in this paper we are not concerned with deriving the branching structure, but rather with making sense of probabilities given that such a branching structure exists. With regards to (ii) and (iii), one should note that these criticisms apply equally well to many-world and single-world versions of decision theory, so they are actually against the whole idea of subjective probability. To a subjectivist they ring hollow, as for them probability is nothing but a tool used by rational agents. Nevertheless, it is hard to deny that there is something objective about quantum probabilities.

To address this point, we show that for any many-worlds theory where a fairly general measure can be assigned to the sets of worlds with a given outcome, an analogue of the law of large number can be proved, which says that in most worlds the relative frequency of some outcome will be close to the proportion of worlds with that outcome. This suggests that we should define the objective probability of an outcome as the proportion of worlds with that outcome. While in Kent's universe it is obvious how to calculate the proportion of worlds, and thus the objective probabilities, this is not the case in Many-Worlds. There, the best this argument can do is say that one should take the relative frequencies as a guess for the proportion of worlds, as in most worlds one will be right.

We also explore a non-symmetric branching scenario where Kent's universe and Many-Worlds assign fundamentally different proportions of worlds to each outcome, and thus the analogy between these many-world theories breaks down. We argue that this is due to the fact that the number of worlds in Kent's universe increases upon branching, while in Many-Worlds the measure of worlds is conserved. To support this argument we introduce a reversed version of Kent's universe where branching conserves the number of worlds, and show that it assign proportions of worlds compatible with Many-Worlds. This breakdown suggests that one of the key rationality principles that Deutsch and Wallace used in their proofs of the Born rule -- called substitutibility by Deutsch and diachronic consistency by Wallace -- is not valid in Kent's universe, and therefore shouldn't be accepted for every many-worlds theory. The proof of the generalised Deutsch-Wallace theorem presented here does not assume this principle, but instead derives it as a theorem in Many-Worlds and the reverse Kent's universe.

\section{Naïve decision theory}

The generalised many-world theories we shall be concerned with have very little structure. We only need them to admit a measurement-like branching situation with $n$ possible outcomes, where after branching $n$ sets of worlds are created, one with each outcome. Each outcome $i$ has a complex coefficient $c_i$ associated to it whose precise meaning depends on the physics underlying the many-worlds theory, but obeys the general constraint that if the coefficient is equal to zero, then no worlds are created with this outcome.

We want to use this branching situation to play a game where an agent receives a reward $r_i$ in the worlds that are created with outcome $i$. The game is then defined by a vector of coefficients $\mathbf{c} \in \mathbb C^n$, and a vector of rewards $\mathbf{r} \in \mathbb R^n$, and can be represented by the $n \times 2 $ matrix
\be
G = (\mathbf{c},\mathbf{r}) = \begin{pmatrix} 
	      c_1 & r_1 \\
	      \vdots & \vdots \\
	      c_n & r_n
            \end{pmatrix},
\ee
where line $i$ specifies the coefficient and reward associated to the worlds with outcome~$i$.

We are concerned with the price at which a rational agent with full knowledge about the situation would accept to buy or sell a ticket for playing a game $G$. This \emph{fair} price is called the value of the game, and is denoted by $V(G)$. From this price we shall infer subjective probabilities that are attributed by the agent to the outcomes. To determine this fair price we shall use a fairly basic decision theory, taken from Ref.~\cite{joyce08}, that is essentially a formalisation of the Dutch book argument.

The first rationality axiom we demand is that if in all worlds the game gives the same reward $r_1$, the agent must assign value $r_1$ to the game. If the value were higher, the agent would lose money in all worlds when buying a ticket for this game; if it were lower, the agent would lose money in all worlds when selling a ticket for this game. This axiom implies that the agent is indifferent to branching per se, and only cares about the rewards their future selves get. This might be false, as we can easily conceive of an agent which is reluctant about branching, and would only accept to pay a value smaller than $r_1$ to play this game. In this case the agent would also be reluctant to sell a ticket for this game, and demand a price higher than $r_1$, so there is no fair price that can be agreed upon, and presumably the game simply won't be played. We deem this behaviour to be irrational. This is specially so in theories where branching is ubiquitous and unavoidable, such as Many-Worlds.

The axiom is then
\begin{itemize}
 \item \textbf{Constancy} \textit{If for some game $G = (\mathbf{c},\mathbf{r})$ all rewards $r_i$ are equal to $r_1$, then $V(G) = r_1$.}
\end{itemize}
The second axiom we demand is that if for a pair of games with equal coefficients $G$ and $G'$ the rewards of the first game are larger than or equal to the rewards of the second game in all worlds, then the agent must value the first game no less than the second game. This could be false, for example, if some agent actively wants to hurt their future selves that receive rewards smaller than the maximal one, and accepts to pay more for a game where this happens. We deem this behaviour to be irrational.

The axiom is then\footnote{\textbf{Dominance} will only be needed to prove Theorem \ref{thm:subjectiveprobability}, that says that rational agents in a many-worlds theory assign subjective probabilities to the worlds. All other results follow without it, including Theorem \ref{thm:1born} and Theorem \ref{thm:2born}, that say that rational agents in Kent's universe and Many-Worlds bet according to the appropriate version of the Born rule.}
\begin{itemize}
 \item \textbf{Dominance} \textit{Let $G = (\mathbf{c},\mathbf{r})$ and $G' = (\mathbf{c},\mathbf{r'})$ be two games that differ only in their rewards. If $\mathbf{r} \ge \mathbf{r'}$, then $V(G) \ge V(G')$.}
\end{itemize}
The third axiom we need is that it shouldn’t matter if a ticket for a game with rewards $\mathbf{r} + \mathbf{r'}$ is sold at once, or broken down into first a ticket for the a game with rewards $\mathbf{r}$ followed by a ticket for another game with rewards $\mathbf{r'}$, where these games are played using the same branching situation. If the price of the composed ticket were higher than the sum of the values of the individual tickets, the agent would lose money in all worlds when buying the composed ticket and selling the individual tickets. On the other hand, if the price of the composed ticket were lower, the agent would lose money in all worlds when buying both individual tickets and selling the composed ticket. The axiom is then
\begin{itemize}
 \item \textbf{Additivity} \textit{Let $G = (\mathbf{c},\mathbf{r})$ and $G' = (\mathbf{c},\mathbf{r'})$ be two games that differ only in their rewards. Then the game $G'' = (\mathbf{c},\mathbf{r}+\mathbf{r'})$ has value $V(G'') = V(G) + V(G').$}
\end{itemize}
This \textbf{Additivity} axiom is eminently reasonable in the scenario considered, where the agent should be willing to act as bookie and bettor in the game. In decision theory, however, one usually considers scenarios where an agent is simply offered the gamble and is trying to decide how much to pay for it, without needing to act as a bookie. In this case it is not irrational to set $V(\mathbf{c},\mathbf{r}+\mathbf{r'}) < V(\mathbf{c},\mathbf{r})+V(\mathbf{c},\mathbf{r'})$; it is in fact necessary to avoid pathological decisions such as Pascal's Wager and the St. Petersburg paradox. As shown by Wallace, this scenario can be dealt with using a more sophisticated decision theory, such as Savage's \cite{wallace09,savage54}. But we believe that such decision-theoretical sophistication distracts from the physical arguments at hand, and will stick with the simpler bettor/bookie scenario. One should note, anyway, that even in Savage's theory \textbf{Additivity} is a good approximation when the values being gambled are small compared to the bettor's total wealth, and one could constrain the analysis to games in which this condition is satisfied.

The fourth and last axiom we need from classical decision theory is that the value function must be continuous on the coefficients and rewards:
\begin{itemize}
 \item \textbf{Continuity} \textit{Let $(G_k)$ be a sequence\footnote{For concreteness, convergence is defined in the metric induced by the norm $\norm{G} := \norm{\mathbf{c}}_3 + \norm{\mathbf{r}}_1$.} of games such that $\lim_{k\to\infty}G_k = G$. Then $V(G) = \lim_{k\to\infty} V(G_k)$}
\end{itemize}
It is here just for mathematical convenience. It can be left out entirely if one is happy to restrict the rewards to be rational numbers (as more realistically they would be integer multiples of eurocents) and the coefficients to belong to some countable subset of the complex numbers that depends on the particular many-worlds theory (as it would be physically suspect to demand them to be specified with infinite precision).

The axioms presented up to this point are already enough to imply that the agent must assign subjective probabilities to the outcomes of the measurement in the game. The proof is an elementary exercise in decision theory, but we shall include it here anyway because it is short, enlightening, and mostly unfamiliar to physicists:
\begin{theorem}\label{thm:subjectiveprobability}
Let $G = (\mathbf{c},\mathbf{r})$ be a game with $n$ outcomes. A rational agent must assign it value
\be V(G) = \sum_{i=1}^n V(\mathbf{c},\mathbf{e}_i)r_i,\ee
where the vectors $\mathbf{e}_i$ are the standard basis, and $V(\mathbf{c},\mathbf{e}_i)$ are the subjective probabilities, as
\be V(\mathbf{c},\mathbf{e}_i) \ge 0 \mathand \sum_{i=1}^n V(\mathbf{c},\mathbf{e}_i) = 1\ee
\end{theorem}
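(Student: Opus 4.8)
The plan is to peel the value function apart reward-by-reward using \textbf{Additivity}, reduce everything to the single-outcome games $(\mathbf c, \mathbf e_i)$, and then read off non-negativity and normalisation from \textbf{Dominance} and \textbf{Constancy}. First I would decompose the reward vector as $\mathbf r = \sum_{i=1}^n r_i \mathbf e_i$ and apply \textbf{Additivity} inductively to split the value into a sum over outcomes, $V(\mathbf c, \mathbf r) = \sum_{i=1}^n V(\mathbf c, r_i \mathbf e_i)$.

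The central step — and the only one that takes real work — is to establish homogeneity, namely $V(\mathbf c, t \mathbf e_i) = t\, V(\mathbf c, \mathbf e_i)$ for every real $t$. Fixing $i$ and writing $f(t) := V(\mathbf c, t \mathbf e_i)$, \textbf{Additivity} gives the Cauchy functional equation $f(s+t) = f(s) + f(t)$, while \textbf{Constancy} applied to the all-zero reward vector gives $f(0) = 0$. This forces $f(q) = q f(1)$ for every rational $q$ by the standard induction, and \textbf{Continuity} upgrades it to $f(t) = t\, f(1) = t\, V(\mathbf c, \mathbf e_i)$ for all $t \in \mathbb R$. Substituting back into the decomposition yields $V(G) = \sum_{i=1}^n V(\mathbf c, \mathbf e_i)\, r_i$. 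I expect this homogeneity argument — specifically the passage from rational to real scalars — to be the main obstacle, which is exactly where \textbf{Continuity} earns its keep among the axioms.

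Finally, the two claimed properties of the coefficients follow cheaply. Non-negativity comes from \textbf{Dominance}: since $\mathbf e_i \ge \mathbf 0$ componentwise, $V(\mathbf c, \mathbf e_i) \ge V(\mathbf c, \mathbf 0)$, and the right-hand side is $0$ by \textbf{Constancy}. Normalisation comes from summing the basis vectors and invoking \textbf{Additivity} once more, $\sum_{i=1}^n V(\mathbf c, \mathbf e_i) = V(\mathbf c, \sum_{i=1}^n \mathbf e_i) = V(\mathbf c, \mathbf 1) = 1$, where the last equality is \textbf{Constancy} applied to the all-ones reward vector. This exhibits the numbers $V(\mathbf c, \mathbf e_i)$ as a genuine probability distribution over the $n$ outcomes.
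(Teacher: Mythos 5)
Your proposal is correct and follows essentially the same route as the paper: Additivity to decompose into single-reward games, the Cauchy-equation argument (which is exactly the paper's rational-homogeneity induction, with Constancy supplying $V(\mathbf{c},\mathbf{0})=0$) upgraded to real scalars by \textbf{Continuity}, and then \textbf{Dominance} and \textbf{Constancy} for non-negativity and normalisation. No gaps; your framing of the homogeneity step as the Cauchy functional equation is merely a more compact packaging of the paper's step-by-step derivation.
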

\begin{proof}
By \textbf{Additivity}
\be V(\mathbf{c},\mathbf{r}) = \sum_{i=1}^n V(\mathbf{c},r_i\mathbf{e}_i),\ee
so we only need to compute the value of games with a single non-zero reward, e.g. $(\mathbf{c},r_1\mathbf{e}_1)$, upon which we shall now focus. Using \textbf{Additivity} again, we see that for any positive integer $b$ we have that
\be V(\mathbf{c},b r_1\mathbf{e}_1) = bV(\mathbf{c},r_1\mathbf{e}_1), \ee
and setting $r_1=1/b$ shows us that
\be V\de{\mathbf{c}, \frac1b \mathbf{e}_1} = \frac1bV(\mathbf{c},\mathbf{e}_1). \ee
Another application of \textbf{Additivity} lets us conclude that for any positive integer $a$ we have that
\be V\de{\mathbf{c}, \frac ab \mathbf{e}_1} = \frac abV(\mathbf{c},\mathbf{e}_1), \ee
so this homogeneity is valid for any positive rational number. To extend it for any rational number $q$ we use \textbf{Constancy} to conclude that $V(\mathbf{c},\mathbf{0}) = 0$, where $\mathbf{0}$ is the vector of all zeroes, and \textbf{Additivity} again to show that 
\be
0 = V(\mathbf{c},\mathbf{0}) = V\de{\mathbf{c}, q \mathbf{e}_1} + V(\mathbf{c},-q\mathbf{e}_1),
\ee
and thefore that
\be
V(\mathbf{c},-q\mathbf{e}_1) = - V\de{\mathbf{c}, q \mathbf{e}_1}.
\ee
To extend this to any real number $\mu$, let $(q_k)$ be a sequence of rational numbers converging to it. Then by \textbf{Continuity}
\be V(\mathbf{c},\mu\mathbf{e}_1) =  \lim_{k\to\infty} V(\mathbf{c},q_k \mathbf{e}_1) = \lim_{k\to\infty} q_k V(\mathbf{c}, \mathbf{e}_1) = \mu V(\mathbf{c},\mathbf{e}_1),\ee
and therefore for any reward vector $r$ we have that 
\be V(\mathbf{c},\mathbf{r}) = \sum_{i=1}^n r_i V(\mathbf{c},\mathbf{e}_i),\ee
so the computation of the value of an arbitrary game reduces to the computation of the values of the elementary games $(\mathbf{c},\mathbf{e}_i)$, which are by definition the subjective probabilities. To see that they are positive and normalised first notice that by \textbf{Dominance}
\be V(\mathbf{c},\mathbf{e}_i) \ge V(\mathbf{c},\mathbf{0}) = 0,\ee
and that by \textbf{Constancy}
\be V(\mathbf{c},\mathbf{1}) = 1 = \sum_{i=1}^n V(\mathbf{c},\mathbf{e}_i),\ee
where $\mathbf{1}$ is the vector of all ones.
\end{proof}

This is as far as we can go by using the naïve decision theory. Rational agents must reason about the games by assigning a probability to each outcome, but the decision theory is silent about what the probabilities must be. 

One can go further, though, by assuming an additional axiom reminiscent of the disreputable\footnote{It is untenable in classical probability because in a given situation there are often several different plausible symmetries that give rise to different probability assignments. This problem does not arise here, as the symmetry at hand will be the one between the coefficients of the worlds, or the amplitudes of the quantum state.} Principle of Indifference from classical probability \cite{vanfraassen89}. It states that one should assign uniform probabilities to symmetric situations: coins and dice should be regarded as equiprobable because there is no preferred side of the coin or face of the die. 

In a deterministic single-world theory the principle is nonsense, as a truly symmetric situation would be incapable of producing an output. Deterministic single world theories that are capable of producing outcomes, such as Bohmian mechanics, must have a variable breaking the symmetry. But a rational agent that knows its value should definitely not assign uniform probabilities, but rather probability one to the outcome that will actually happen. 

Probabilistic single-world theories fare a bit better, as one can have a perfectly symmetric situation before the measurement. But after the measurement the symmetry is necessarily broken, as only one outcome actually happens, and a rational agent that knows which should definitely not assign uniform probabilities.

In both deterministic and probabilistic single-world theories, therefore, an indifference principle can only apply to agents with restricted knowledge: about the present, in the deterministic case, or about the future, in the probabilistic case. Only in many-world theories can the symmetry remain rigorously unbroken, even after the measurement is made, and agents with full knowledge about the game can apply the indifference principle. Note that even if the agent's future selves could relay information to the past about which outcome they experience it wouldn't help, as the agent already knows that they will have future selves experiencing all outcomes.

The axiom is then
\begin{itemize}
 \item \textbf{Indifference} \textit{Let $G = (\mathbf{c},\mathbf{r})$ and $\sigma$ be a permutation. Then the game with permuted coefficients and rewards $G_\sigma = (\sigma(\mathbf{c}),\sigma(\mathbf{r}))$ has value $V(G_\sigma) = V(G)$}
\end{itemize}
Notice that this axiom is not merely saying that it does not matter whether we write down $1$ or $2$ to label the outcome of a coin toss, but rather that it does not matter if we exchange the coefficients and the rewards of the set of worlds with outcome $1$ with those of the set of worlds with outcome $2$. Worlds $1$ and worlds $2$ are physically different, if nothing else for having different measurement results. \textbf{Indifference} implies that whatever these differences are, the coefficients and rewards are the only relevant ones. This is clearly not true if outcome $1$, and not $2$, happens.

This axiom suffices to prove that symmetric games have uniform probabilities:
\begin{lemma}[Symmetry]\label{thm:symmetry}
Let $G = (\mathbf{c},\mathbf{r})$ be a game with $n$ outcomes such that all coefficients $c_i$ are equal to $c_1$. Then \be V(G) = \frac1n \sum_{i=1}^n r_i.\ee
\end{lemma}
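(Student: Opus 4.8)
The plan is to combine Theorem~\ref{thm:subjectiveprobability} with the \textbf{Indifference} axiom. Theorem~\ref{thm:subjectiveprobability} already tells us that $V(\mathbf{c},\mathbf{r}) = \sum_{i=1}^n V(\mathbf{c},\mathbf{e}_i)\,r_i$, so the lemma will follow immediately once I show that for a game with all coefficients equal to $c_1$ the subjective probabilities are uniform, namely $V(\mathbf{c},\mathbf{e}_i) = 1/n$ for every $i$. Thus the entire task reduces to pinning down these $n$ numbers.

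First I would observe the key structural fact that makes \textbf{Indifference} bite here: since every coefficient equals $c_1$, the coefficient vector is invariant under any permutation, i.e.\ $\sigma(\mathbf{c}) = \mathbf{c}$. This is what lets the symmetry act purely on the rewards. Applying \textbf{Indifference} to the elementary game $(\mathbf{c},\mathbf{e}_i)$ with a permutation $\sigma$ then gives
\be
V(\mathbf{c},\mathbf{e}_{\sigma(i)}) = V(\sigma(\mathbf{c}),\sigma(\mathbf{e}_i)) = V(\mathbf{c},\mathbf{e}_i),
\ee
and since for any pair of indices $i,j$ there is a permutation sending $i$ to $j$, all of the subjective probabilities $V(\mathbf{c},\mathbf{e}_i)$ must coincide.

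Next I would invoke the normalisation already established in Theorem~\ref{thm:subjectiveprobability}, namely $\sum_{i=1}^n V(\mathbf{c},\mathbf{e}_i) = 1$. Together with the equality of all the terms this forces $V(\mathbf{c},\mathbf{e}_i) = 1/n$ for each $i$. Substituting back into the expansion from Theorem~\ref{thm:subjectiveprobability} yields $V(\mathbf{c},\mathbf{r}) = \frac1n\sum_{i=1}^n r_i$, as claimed.

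I do not expect any genuine obstacle in this argument; it is a short deduction rather than a construction. The only point that deserves care is the conceptual one flagged in the discussion preceding the axiom: \textbf{Indifference} is being applied to physically distinct worlds, and it is precisely the equality of the coefficients (so that $\sigma(\mathbf{c}) = \mathbf{c}$) that legitimises permuting the rewards without changing the value. Getting the permutation bookkeeping right---that $\sigma$ acting on $\mathbf{e}_i$ relabels the distinguished outcome---is the one place to be attentive, but it is routine.
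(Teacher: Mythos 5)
Your proof is correct, but it takes a genuinely different route from the paper's. You go through Theorem~\ref{thm:subjectiveprobability}: decompose $V(\mathbf{c},\mathbf{r}) = \sum_i V(\mathbf{c},\mathbf{e}_i)r_i$, use \textbf{Indifference} (with the correct observation that $\sigma(\mathbf{c})=\mathbf{c}$) to show all elementary values coincide, and then use the normalisation $\sum_i V(\mathbf{c},\mathbf{e}_i)=1$ to pin them to $1/n$. The paper instead works directly on the game: it applies the $n$ cyclic permutations to $\mathbf{r}$, sums the permuted reward vectors into the constant vector $\boldsymbol{\rho}$ with entries $\sum_i r_i$, and compares the two evaluations of $V(\mathbf{c},\boldsymbol{\rho})$ given by \textbf{Constancy} and \textbf{Additivity}, yielding $nV(G)=\sum_i r_i$. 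The practical difference is the axiomatic footprint: the paper's argument uses only \textbf{Indifference}, \textbf{Constancy}, and \textbf{Additivity}, whereas your route inherits Theorem~\ref{thm:subjectiveprobability}'s reliance on \textbf{Continuity} (needed there to extend homogeneity from rational to real rewards), so for real-valued $\mathbf{r}$ your lemma is conditional on \textbf{Continuity} while the paper's is not --- relevant given the paper's remark that \textbf{Continuity} is dispensable if rewards are rational. One caution about the footnote stating that \textbf{Dominance} is needed only for Theorem~\ref{thm:subjectiveprobability} and that Theorems~\ref{thm:1born} and~\ref{thm:2born} follow without it: since those theorems rest on this lemma, your proof formally threatens that claim by citing Theorem~\ref{thm:subjectiveprobability}; it survives only because you use just the decomposition and normalisation clauses, which are \textbf{Dominance}-free (\textbf{Dominance} enters only for positivity, which you never invoke) --- a point worth making explicit. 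What your route buys in exchange is transparency: it exhibits the uniform subjective probabilities $V(\mathbf{c},\mathbf{e}_i)=1/n$ directly, rather than obtaining the value $\frac1n\sum_i r_i$ without naming the probabilities.
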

\begin{proof}
Let
\be G_{\sigma_i} = (\sigma_i(\mathbf{c}),\sigma_i(\mathbf{r})) = (\mathbf{c},\sigma_i(\mathbf{r}))\ee
be a version of $G$ with the $i$th cyclic permutation applied to the coefficients and rewards, so that by \textbf{Indifference} $V(G_{\sigma_i}) = V(G)$. If one then defines the reward vector
\be \boldsymbol{\rho} = \sum_{i=1}^n \sigma_i(\mathbf{r}),\ee
all its components are equal to $\sum_{i=1}^n r_i$, so by \textbf{Constancy} the game $\Gamma = (\mathbf{c},\boldsymbol{\rho})$ has value
\be V(\Gamma) = \sum_{i=1}^n r_i,\ee
but also by \textbf{Additivity}
\be V(\Gamma) = \sum_{i=1}^n V(\mathbf{c},\sigma_i(\mathbf{r})) = \sum_{i=1}^n V(G_{\sigma_i}) = nV(G),\ee
and therefore 
\be V(G) = \frac1n \sum_{i=1}^n r_i.\ee
\end{proof}

Now, to deal with games with unequal coefficients, we shall do a fine-graining argument to reduce them to symmetric games. How the fine-graining works depends on the precise physics of the many-worlds theory, so we'll have different fine-graining axioms for different theories.

\section{Kent's universe}\label{sec:kentsuniverse}

In Kent's universe the agent lives in a deterministic computer simulation run by some advanced civilization. The branching happens at the press of a button that is displayed on a wall, that in addition contains a list of non-negative integers $\mathbf{m} = (m_1,\ldots,m_n)$ and real numbers $\mathbf{r} = (r_1,\ldots,r_n)$. These integers play the role of the coefficients in the game, and $m_i$ is literally the number of successor worlds which are created with reward $r_i$ in them.

In this case it is clear how to do fine-graining: if one plays the game 
\be G = \begin{pmatrix} 1 & r_1 \\ 2 & r_2 \end{pmatrix} \ee
or the game
\be G' = \begin{pmatrix} 1 & r_1 \\ 1 & r_2 \\ 1 & r_2 \end{pmatrix}, \ee
in both games three successor worlds are created, one with reward $r_1$ in it, and two with reward $r_2$. The only difference is that while in game $G$ both worlds with reward $r_2$ are labelled with outcome $2$, in game $G'$ one of the worlds with reward $r_2$ is labelled with outcome $2$ and the other with outcome $3$. We postulate that this difference does not matter, and that $V(G') = V(G)$. Since $G'$ is symmetric, we can evaluate $V(G')$ using Lemma \ref{thm:symmetry}, and thus determine that
\be V(G) = \frac13 r_1 + \frac 23 r_2.\ee
Generalising this argument, we can determine the value of \emph{any} game by fine-graining it into a symmetric game. The formal postulate we need is
\begin{itemize}
 \item \textbf{1-Fine-graining} \textit{Let 
 \be
 G = \begin{pmatrix}
 m_1 & r_1 \\
 m_2 & r_2 \\
 \vdots & \vdots \\
 m_n & r_n 
 \end{pmatrix}
 \ee
 be a game with $n$ outcomes. Then for any non-negative integers $m_{11},m_{12}$ such that $m_{11} + m_{12} = m_1$ the fine-grained game with $n+1$ outcomes 
 \be
 G' = \begin{pmatrix}
 m_{11} & r_1 \\
 m_{12} & r_1 \\
 m_2 & r_2 \\
 \vdots & \vdots \\
 m_n & r_n
\end{pmatrix}
\ee
has value $V(G') = V(G)$}
\end{itemize}
\begin{theorem}\label{thm:1born}
Let $G = (\mathbf{m},\mathbf{r})$ be a game. A rational agent in Kent's universe must assign it value
\be V(G) = \frac1{\norm{\mathbf{m}}_1}\sum_{i=1}^n m_i r_i.\ee
\end{theorem}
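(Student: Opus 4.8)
The plan is to reduce an arbitrary game to a symmetric one by repeated fine-graining, and then to invoke Lemma~\ref{thm:symmetry}. Write $M = \norm{\mathbf{m}}_1 = \sum_{i=1}^n m_i$ for the total number of successor worlds; since the $m_i$ are non-negative integers, the $1$-norm is simply their sum.

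First I would fine-grain every outcome down to coefficient~$1$. Applying \textbf{1-Fine-graining} to an outcome $i$ carrying coefficient $m_i \ge 1$ with the split $m_i = 1 + (m_i-1)$, and iterating $m_i - 1$ times, breaks that outcome into $m_i$ separate outcomes, each with coefficient~$1$ and reward~$r_i$, without changing the value of the game. Outcomes with $m_i = 0$ correspond to no worlds by the defining constraint of the framework, and may be discarded. Performing this for every $i$ yields a game $G'$ with $M$ outcomes, all of whose coefficients equal~$1$ and whose reward vector lists each $r_i$ with multiplicity $m_i$; by \textbf{1-Fine-graining}, $V(G') = V(G)$.

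Since all coefficients of $G'$ are equal, Lemma~\ref{thm:symmetry} applies directly and gives $V(G')$ as the average of the $M$ rewards of $G'$. As $r_i$ occurs $m_i$ times among those rewards, their sum is $\sum_{i=1}^n m_i r_i$, whence
\be V(G) = V(G') = \frac{1}{\norm{\mathbf{m}}_1} \sum_{i=1}^n m_i r_i, \ee
which is the claim.

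I do not expect a genuine obstacle here: the argument is essentially bookkeeping, reducing the problem to the already-proved symmetric case. The one point that needs care is the treatment of zero-coefficient outcomes, which must be dropped rather than fine-grained, since one cannot split $0$ into positive parts; this is harmless precisely because such outcomes produce no physical worlds. Making the iterated fine-graining fully rigorous is a routine induction on $M$.
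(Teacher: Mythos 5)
Your proof is correct and follows essentially the same route as the paper's: repeatedly apply \textbf{1-Fine-graining} to split each coefficient $m_i$ into $m_i$ unit coefficients, then evaluate the resulting symmetric game on $\norm{\mathbf{m}}_1$ outcomes via Lemma~\ref{thm:symmetry}. Your explicit handling of zero-coefficient outcomes is a point the paper's proof passes over silently, and your justification for discarding them (no worlds are created, by the framework's defining constraint) is as good as anything the stated axioms provide.
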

\begin{proof}
Starting from an arbitrary $n$-outcome game $G = (\mathbf{m},\mathbf{r})$, we can repeatedly apply \textbf{1-Fine-graining} to take it to a symmetric game $G' = (\mathbf{1}^{(\sum_{i=1}^nm_i)},\mathbf{r'})$, where $\mathbf{1}^{(\sum_{i=1}^nm_i)}$ is the vector of $\sum_{i=1}^nm_i$ ones. The reward vector $\mathbf{r'}$ consists of $m_1$ rewards $r_1$, $m_2$ rewards $r_2$, etc., and can be written as $\mathbf{r'} = \bigoplus_{i=1}^n \mathbf{r}_i^{(m_i)}$. Its value $V(G')$ can then by calculated via Lemma \ref{thm:symmetry}, and is given by
\be V(G') = \frac1{\sum_{i=1}^nm_i}\sum_{i=1}^n \sum_{j=1}^{m_i} r_i = \frac1{\norm{\mathbf{m}}_1}\sum_{i=1}^n m_i r_i.\ee
\end{proof}

\section{Many-Worlds}\label{sec:manyworlds}

In the Many-worlds interpretation a measurement is a unitary transformation, that via decoherence gives rise to superpositions of quasi-classical worlds. We shall not discuss how this happens, as this has been extensively explored elsewhere \cite{saunders93,wallace01,zurek03b,hartle08}. Rather, we shall take for granted that this is indeed the case, and that one can instantiate the games by doing measurements on quantum states, with the complex amplitudes playing the role of the coefficients.

The game is started by giving the agent a state
\be \ket{\psi} = \sum_{i=1}^n \alpha_i \ket{i},\ee
where the $\ket{i}$ are distinguishable states of some infinite-dimensional degree of freedom, such as position. The agent then does a measurement in the computational basis: a unitary transformation that acts on the basis state $\ket{i}$ together with the measurement device in the ready state $\ket{M_?}$ and takes the measurement device to the state with outcome $i$ called $\ket{M_i}$:
\be \ket{i}\ket{M_?} \mapsto \ket{i}\ket{M_i}.\ee
The game is concluded by giving the agent reward $r_i$ in the worlds with outcome $i$, with the whole process taking the initial state $\sum_{i=1}^n \alpha_i \ket{i}\ket{M_?}\ket{r_?}$ to the final state 
\be \ket{G} = \sum_{i=1}^n \alpha_i \ket{i}\ket{M_i}\ket{r_i}. \ee
Since the measurement is done in a fixed basis, the game is completely defined by the vector of coefficients $\boldsymbol{\alpha}$ and the vector of rewards $\mathbf{r}$, and so we can represent it by the matrix $G = (\boldsymbol{\alpha},\mathbf{r})$.

We want to proceed as in Kent's universe, and use a fine-graining postulate to reduce arbitrary games to symmetric games. It is not so obvious, however, how to do fine-graining in the Many-Worlds picture, as we cannot count worlds in order to say that games are equivalent if they assign the same number of worlds to each reward. As argued by Wallace, the continuous nature of quantum mechanics and the arbitrariness of the border between worlds make it impossible to count them in a physically meaningful way \cite{wallace07}. One could try, instead, to arbitrarily postulate that there is one world for each measurement outcome with nonzero coefficient, as is often proposed\footnote{To the best of our knowledge there has been no attempt to calculate the number of worlds in an even remotely realistic model of a measurement.} \cite{graham73,price10}.

To see that this is untenable, consider the historical Stern-Gerlach experiment \cite{gerlach22}, that measured the spin\footnote{Note that Stern and Gerlach were not aware that they were measuring spin, rather they interpreted the experiment as a proof of Bohr-Sommerfelds spatial quantization hypothesis.} of silver atoms by letting them accumulate on a glass plate; if some atom ended up on the left hand side of the glass plate it had spin $+\hbar\frac12$, and if ended up on the right hand side it had spin $-\hbar\frac12$. Their apparatus, however, was not doing a left/right measurement, but rather a much more precise position measurement, that needs to be coarse-grained in order to obtain a two-outcome spin measurement. Should we then consider the large number of distinguishable positions to be the number of worlds produced in this experiment? But if so, what would we make of a more modern Stern-Gerlach experiment, that instead of the glass plate uses a Langmuir-Taylor detector, a single hot wire that is scanned across the neutral atom beam \cite{phipps32,frisch33}? Should it be taken to produce two worlds, one in which the atom hit the wire and another in which it didn't? And what if we do a Frankenstein version of the Stern-Gerlach experiment, with a glass plate on the left-hand side and a Langmuir-Taylor detector on the right-hand side? Does it create more worlds with spin $+\hbar\frac12$ than with spin $-\hbar\frac12$?

This should make it clear that the number of outcomes in a measurement is largely arbitrary, and of little relevance. What actually matters is that atoms with spin $+\hbar\frac12$ go left and atoms with spin $-\hbar\frac12$ go right; these different experimental setups are equivalent ways to do the measurement, and we shall base the fine-graining argument in Many-Worlds precisely on this equivalence.

\begin{figure}[ht]
\centering
\includegraphics{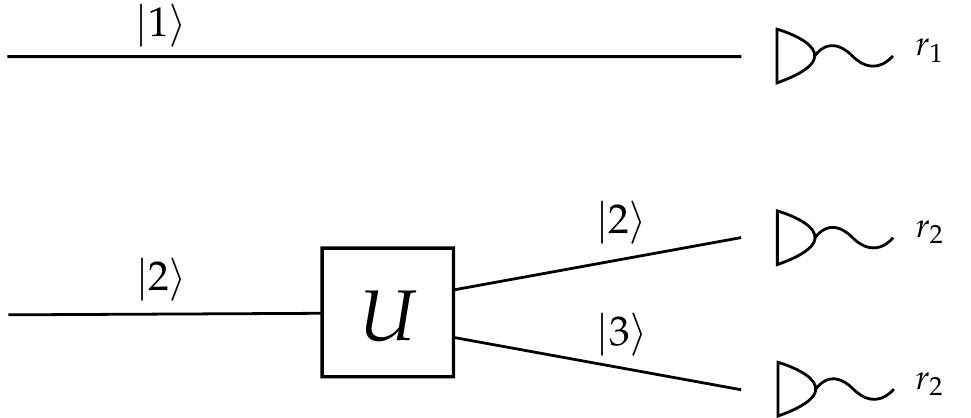}
\caption{One can measure a state in the $\{\ket{1},\ket{2}\}$ basis by applying a unitary $U$ to $\ket{2}$ that takes it to a superposition of $\ket{2}$ and $\ket{3}$, measuring the state in the $\{\ket{1},\ket{2},\ket{3}\}$ basis, and label as $2$ both results $2$ and $3$.}\label{fig:diffraktion}
\end{figure}

Consider then a game where a measurement is made on the state 
\be \ket{\psi} = \alpha\ket{1} + \beta\ket{2},\ee
reward $r_1$ is given in the worlds with outcome $1$, and reward $r_2$ in the worlds with outcome $2$, taking it to the final state
\be \ket{G} = \alpha \ket{1}\ket{M_1}\ket{r_1} + \beta\ket{2}\ket{M_2}\ket{r_2},\ee
which can also be represented as the game matrix
\be G = \begin{pmatrix} \alpha & r_1 \\ \beta & r_2 \end{pmatrix}.\ee
An equivalent way to play this game, shown in Fig. \ref{fig:diffraktion}, is to make a measurement on the state
\be \ket{\psi'} = \alpha\ket{1} + \beta U\ket{2} = \alpha \ket{1} + \beta\gamma\ket{2} + \beta\delta\ket{3}\ee
instead, where an unitary $U$ was applied to $\ket{2}$, and apply the label $2$ and give reward $r_2$ to both outcomes $2$ and $3$, leading to the final state
\be \ket{G'} = \alpha \ket{1}\ket{M_1}\ket{r_1} + \beta\gamma\ket{2}\ket{M_2}\ket{r_2} + \beta\delta\ket{3}\ket{M_2'}\ket{r_2},\ee
where $\ket{M_2'}$ is a measurement result physically distinct from $\ket{M_2}$, but with the same label.

This second way to do the measurement still gives outcomes $1$ and $2$ whenever the initial state is $\ket{1}$ and $\ket{2}$, so it respects the definition of measurement given above and normal practice in real laboratories.

If one does not, however, coarse-grain the outcomes $2$ and $3$ together, but rather leaves them distinct, then we are playing instead the three-outcome game
\be G' = \begin{pmatrix} \alpha & r_1 \\ \beta\gamma & r_2 \\ \beta\delta & r_2 \end{pmatrix}. \ee
We postulate that this labelling choice does not make a difference, so a rational agent must assign the same value to games $G$ and $G'$. This suffices to fine-grain arbitrary games into symmetric games. As an illustration, consider the game 
\be G = \begin{pmatrix} 1 & r_1 \\ 2 e^{i\theta} & r_2 \end{pmatrix}. \ee
Doing the fine-graining with a unitary $U$ that takes the state $\ket{2}$ to
\be U\ket{2} = \frac{e^{-i\theta}}{\sqrt2}\de{\ket{2} + \ket{3}},\ee
we take $G$ to
\be G' = \begin{pmatrix} 1 & r_1 \\ \sqrt{2} & r_2 \\ \sqrt{2} & r_2\end{pmatrix}. \ee
Applying similar unitaries to $\ket{2}$ and $\ket{3}$, we take $G'$ to 
\be G'' = \begin{pmatrix} 1 & r_1 \\ 1 & r_2 \\ 1 & r_2 \\ 1 & r_2 \\ 1 & r_2\end{pmatrix}, \ee
which is a symmetric game, and so $V(G'') = V(G)$ can be evaluated via Lemma \ref{thm:symmetry}, resulting in
\be V(G) = \frac15 r_1 + \frac45 r_2.\ee
We can now prove the Born rule in the general case by formalising this argument via the following axiom:
\begin{itemize}
 \item \textbf{2-Fine-graining} \textit{Let \be G = \begin{pmatrix}
 \alpha_1 & r_1 \\
 \alpha_2 & r_2 \\
 \vdots & \vdots \\
 \alpha_n & r_n 
 \end{pmatrix}\ee be a game with $n$ outcomes. Then for any complex numbers $\alpha_{11},\alpha_{12}$ such that\\ ${\sqrt{|\alpha_{11}|^2 + \abs{\alpha_{12}}^2} = \abs{\alpha_1}}$ the fine-grained game with $n+1$ outcomes \be G' = \begin{pmatrix}
 \alpha_{11} & r_1 \\
 \alpha_{12} & r_1 \\
 \alpha_2 & r_2 \\
 \vdots & \vdots \\
 \alpha_n & r_n
\end{pmatrix} \ee has value $V(G') = V(G)$}
\end{itemize}

\begin{theorem}[Deutsch-Wallace]\label{thm:2born}
Let $G = (\boldsymbol{\alpha},\mathbf{r})$ be a game. A rational agent in Many-Worlds must assign it value
\be V(G) = \frac{1}{\norm{\boldsymbol{\alpha}}^2_2}\sum_{i=1}^n |\alpha_i|^2 r_i\ee 
\end{theorem}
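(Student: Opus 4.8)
The plan is to mirror the proof of Theorem \ref{thm:1born} exactly: I will use \textbf{2-Fine-graining} to reduce an arbitrary game to a symmetric one and then read off its value from Lemma \ref{thm:symmetry}. The only structural difference with Kent's universe is the invariant preserved by fine-graining. Whereas \textbf{1-Fine-graining} conserves the number of worlds $\sum_i m_i$, the constraint $\sqrt{\abs{\alpha_{11}}^2 + \abs{\alpha_{12}}^2} = \abs{\alpha_1}$ shows that \textbf{2-Fine-graining} conserves the sum of squared magnitudes $\sum_i \abs{\alpha_i}^2 = \norm{\boldsymbol{\alpha}}_2^2$. So the natural target is to split outcome $i$ into a number of sub-outcomes proportional to $\abs{\alpha_i}^2$, each carrying a common coefficient.

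First I would treat the case in which the squared magnitudes $\abs{\alpha_i}^2$ are commensurable, \ie there is a positive real $\abs{c}^2$ and non-negative integers $k_i$ with $\abs{\alpha_i}^2 = k_i \abs{c}^2$. Applying \textbf{2-Fine-graining} repeatedly to outcome $i$, I peel off one piece of magnitude $\abs{c}$ at a time; since the axiom constrains only the magnitudes and leaves the phases free, I can take every piece to be literally the same complex number $c$, and after $k_i$ splits the residual piece automatically has magnitude $\abs{c}$ as well. Carrying this out for every outcome produces a game $G'$ with $N = \sum_i k_i = \norm{\boldsymbol{\alpha}}_2^2/\abs{c}^2$ outcomes whose coefficients are all equal to $c$, of which $k_i$ carry reward $r_i$. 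By \textbf{2-Fine-graining} $V(G) = V(G')$, and since $G'$ is symmetric Lemma \ref{thm:symmetry} gives
\be V(G) = \frac1N \sum_{i=1}^n k_i r_i = \frac{\abs{c}^2}{\norm{\boldsymbol{\alpha}}_2^2} \sum_{i=1}^n \frac{\abs{\alpha_i}^2}{\abs{c}^2} r_i = \frac{1}{\norm{\boldsymbol{\alpha}}_2^2}\sum_{i=1}^n \abs{\alpha_i}^2 r_i,\ee
which is the claimed formula. (The worked illustration with $\boldsymbol{\alpha} = (1, 2e^{i\theta})$ is exactly this construction with $\abs{c} = 1$, $k_1 = 1$, $k_2 = 4$.)

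To remove the commensurability assumption I would invoke \textbf{Continuity}, just as the rational-to-real step was handled in Theorem \ref{thm:subjectiveprobability}. Keeping the phases fixed, I approximate each $\abs{\alpha_i}^2$ by a rational number, obtaining a sequence of games $G_k \to G$ (in the metric of the \textbf{Continuity} axiom) whose squared magnitudes are commensurable at every stage. The commensurable case gives the formula for each $G_k$, and since the right-hand side $\frac{1}{\norm{\boldsymbol{\alpha}}_2^2}\sum_i \abs{\alpha_i}^2 r_i$ is a continuous function of $\boldsymbol{\alpha}$ wherever $\norm{\boldsymbol{\alpha}}_2 \neq 0$, passing to the limit yields the result for arbitrary coefficients. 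I expect this final extension to be the only delicate point: one must check that the approximating games remain genuine games and that $\norm{\boldsymbol{\alpha}}_2 \neq 0$, the latter being guaranteed because a game with no worlds at all is not under consideration. Everything else is the direct analogue of the Kent's universe argument, with $\abs{\alpha_i}^2$ playing the role that $m_i$ played there.
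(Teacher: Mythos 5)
Your proposal is correct and follows essentially the same route as the paper's proof: reduce a game whose squared magnitudes are commensurable (in the paper, rational, written as $\alpha_j = \sqrt{a_j/b_j}\,e^{i\theta_j}$ with common denominator $d = \prod_i b_i$) to a symmetric game by repeated \textbf{2-Fine-graining}, evaluate it with Lemma \ref{thm:symmetry}, and extend to arbitrary complex coefficients by density and \textbf{Continuity}. The only cosmetic differences are that the paper first strips the phases via a trivial fine-graining while you carry a common complex coefficient $c$ throughout (equally valid, since the axiom constrains only magnitudes), and a harmless off-by-one in your write-up: splitting outcome $i$ into $k_i$ equal pieces takes $k_i - 1$ applications of the axiom, not $k_i$.
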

\begin{proof}
Let then $G = (\boldsymbol{\alpha},\mathbf{r})$ be a $n$-outcome game such that the coefficients $\alpha_j$ are of the form $\alpha_j = \sqrt{\frac{a_j}{b_j}}e^{i\theta_j}$, and do a trivial \textbf{2-Fine-graining} in each outcome to take it to a $n$-outcome game $G' = (\boldsymbol{\alpha'},\mathbf{r})$ with coefficients $\alpha_i' = |\alpha_i| = \sqrt{\frac{a_i}{b_i}}$. 

Let then $d=\prod_{i=1}^n b_i$, and define the integer $a_i' = da_i/b_i$ so that $\alpha_i = \sqrt{\frac{a'_i}{d}}$. One can then fine-grain each coefficient $\alpha_i$ into $a'_i$ coefficients equal to $1/\sqrt{d}$, obtaining a symmetric game $G'' = (\mathbf{1}^{(\sum_{i=1}^n a_i')}/\sqrt{d},\mathbf{r'})$ with reward vector $\mathbf{r'} = \bigoplus_{i=1}^n \mathbf{r}_i^{(a_i')}$. By Lemma \ref{thm:symmetry} the value of $G''$ is given by
\be V(G'') = \frac{1}{\sum_{i=1}^n a_i'} \sum_{i=1}^n \sum_{j=1}^{a_i'}r_i = \frac{1}{\sum_{i=1}^n \frac{a_i}{b_i}} \sum_{i=1}^n \frac{a_i}{b_i}r_i = \frac{1}{\norm{\boldsymbol{\alpha}}^2_2}\sum_{i=1}^n |\alpha_i|^2 r_i,\ee
and by \textbf{2-Fine-graining} this is equal to $V(G)$.

To generalise this argument for a game $G = (\boldsymbol{\alpha},\mathbf{r})$ with arbitrary complex coefficients $\alpha_j$, it is enough to notice that numbers of the form $\sqrt{\frac{a}{b}}e^{i\theta}$ are dense in the complex plane. Let then $\boldsymbol{\alpha}_k$ be a sequence of coefficient vectors such that $\lim_{k\to\infty}\boldsymbol{\alpha}_k = \boldsymbol{\alpha}$ and that for all $k$ we have $\alpha_{kl} = \sqrt{\frac{a_{kl}}{b_{kl}}}e^{i\theta_l}$. Let then $G_k = (\boldsymbol{\alpha}_k,\mathbf{r})$. By \textbf{Continuity}, we have that 
\be V(G) = \lim_{k\to\infty} V(G_k) = \lim_{k\to\infty} \frac{1}{\norm{\boldsymbol{\alpha}_k}^2_2}\sum_{l=1}^n \abs{\alpha_{kl}}^2 r_l =  \frac{1}{\norm{\boldsymbol{\alpha}}^2_2}\sum_{j=1}^n |\alpha_i|^2 r_j.\ee
\end{proof}

Note that we did not have to assume that the initial state was normalised, as the proof implies that a factor of $\norm{\boldsymbol{\alpha}}^2_2$ must appear on the denominator of the value function. We did have to assume that the transformations between quantum states are done via unitaries, but this comes from the background assumption of Many-Worlds quantum mechanics. 

In some textbooks one postulates the Born rule as a fundamental principle, and motivates unitary evolution from conservation of probabilities, see \eg Refs.~\cite{sakurai93,feynman77,landau65}. Such an approach would make deriving the Born rule from the unitary evolution a bit circular. Other textbooks, however, postulate unitarity as fundamental, motivated by analogy with Hamiltonian classical mechanics, and consider the Born rule a separate postulate, see \eg Refs.~\cite{vonneumann32,chuang00,peres06}. This latter approach parallels the historical development of the Schrödinger equation and the Born rule \cite{schroedinger26,born26}.

\subsection{Generalisation}\label{sec:generalisation}

To emphasize that the Born rule comes from fine-graining through unitary transformations that preserve the $2$-norm, we'd like to generalise the fine-graining argument to transformations that preserve some other norm. Which norms should we consider? As we show in Appendix \ref{sec:consistency}, some weak consistency conditions imply that fine-graining can only be done if the transformations preserve the $p$-norm of the vectors.

Consider, then, some hypothetical\footnote{One should take seriously the ``hypothetical'' here, as theories where the $p$-norm is preserved are rather pathological. As shown in Ref.~\cite{aaronson04}, the only linear transformations that preserve the $p$-norm of all vectors for $p\neq1,2$ are permutations composed with phases. Here we get around this by only asking the transformation $T$ to preserve the norm of the computational basis.} many-worlds theory where the $p$-norm is preserved. In such a theory we can fine-grain the game
\be G = \begin{pmatrix} 1 & r_1 \\ 2 e^{i\theta} & r_2 \end{pmatrix} \ee
by using a transformation $T$ that takes the state $\ket{2}$ into\footnote{In this example $p$ must be of the form $\log_2 n$ for integer $n\ge2$, but in general any real $p\ge1$ works.}
\be T\ket{2} = \frac{e^{-i\theta}}{2}\sum_{i=1}^{2^p} \ket{i+1}.\ee
The fine-grained game is then
\be G' = \begin{pmatrix} 1 & r_1 \\ \mathbf{1}^{(2^p)} & r_2 \end{pmatrix}, \ee
where $\mathbf{1}^{(2^p)}$ is the vector of $2^p$ ones, and from the analogous argument we conclude that
\be V(G) = \frac1{2^p} r_1 + \frac{2^p-1}{2^p} r_2.\ee
The general $p$-Born rule can then be proven via the following axiom:
\begin{itemize}
 \item \textbf{$p$-Fine-graining} \textit{Let \be G = \begin{pmatrix}
 \alpha_1 & r_1 \\
 \alpha_2 & r_2 \\
 \vdots & \vdots \\
 \alpha_n & r_n 
 \end{pmatrix}\ee be a game with $n$ outcomes. Then for any complex numbers $\alpha_{11},\alpha_{12}$ such that\\ ${\de{|\alpha_{11}|^p + \abs{\alpha_{12}}^p}^{\frac1p} = \abs{\alpha_1}}$ the fine-grained game with $n+1$ outcomes \be G' = \begin{pmatrix}
 \alpha_{11} & r_1 \\
 \alpha_{12} & r_1 \\
 \alpha_2 & r_2 \\
 \vdots & \vdots \\
 \alpha_n & r_n
\end{pmatrix} \ee has value $V(G') = V(G)$}
\end{itemize}
\begin{theorem}
Let $G = (\boldsymbol{\alpha},\mathbf{r})$ be a game. A rational agent in a many-worlds theory where transformations preserve the $p$-norm must assign it value
\be\label{eq:pborn}
V(G) = \frac{1}{\norm{\boldsymbol{\alpha}}^p_p}\sum_{i=1}^n |\alpha_i|^p r_i
\ee
\end{theorem}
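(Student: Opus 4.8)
The plan is to mirror the proof of Theorem \ref{thm:2born} essentially verbatim, replacing the square root by a $p$-th root throughout and invoking \textbf{$p$-Fine-graining} wherever \textbf{2-Fine-graining} was used. First I would treat the special case in which every coefficient has the form $\alpha_j = \de{a_j/b_j}^{1/p}e^{i\theta_j}$ with $a_j,b_j$ non-negative integers. A trivial \textbf{$p$-Fine-graining} in each outcome strips the phases, producing a game with real non-negative coefficients $\abs{\alpha_i} = \de{a_i/b_i}^{1/p}$; this is permitted because $\de{\abs{\alpha_i}^p + 0^p}^{1/p} = \abs{\alpha_i}$, so splitting off a zero-modulus world is always a legitimate fine-graining.

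Next I would place all coefficients over a common denominator. Setting $d = \prod_{i=1}^n b_i$ and $a_i' = da_i/b_i \in \mathbb{N}$, each coefficient becomes $\abs{\alpha_i} = \de{a_i'/d}^{1/p}$. Applying \textbf{$p$-Fine-graining} repeatedly, I would split each such coefficient into $a_i'$ copies of $1/d^{1/p}$: this is consistent because the $p$-norm of $a_i'$ such copies is $\de{a_i'/d}^{1/p} = \abs{\alpha_i}$, and one reaches it by peeling off one world of modulus $1/d^{1/p}$ at a time, each peel being a legitimate binary $p$-Fine-graining. The result is a symmetric game $G''$ with $\sum_i a_i'$ worlds all carrying coefficient $1/d^{1/p}$, whose value Lemma \ref{thm:symmetry} gives as the plain average of the rewards. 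Simplifying this average reproduces \eqref{eq:pborn}, since $\abs{\alpha_i}^p = a_i/b_i$ and $\norm{\boldsymbol{\alpha}}_p^p = \sum_i a_i/b_i$.

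Finally I would lift the restriction on the coefficients by a density-plus-continuity argument, exactly as in Theorem \ref{thm:2born}. The set of numbers $\de{a/b}^{1/p}e^{i\theta}$ is dense in $\mathbb{C}$: the moduli $\de{a/b}^{1/p}$ range over $p$-th roots of non-negative rationals, which are dense in $[0,\infty)$ because $x\mapsto x^{1/p}$ is continuous and the rationals are dense, while the phase $\theta$ is free. Approximating $\boldsymbol{\alpha}$ by a sequence $\boldsymbol{\alpha}_k$ of such admissible vectors and invoking \textbf{Continuity} then extends the formula to arbitrary complex coefficients.

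I expect the only genuine subtlety -- rather than a true obstacle -- to be checking that the iterated binary splitting really does manufacture $a_i'$ exactly equal coefficients out of a single one while every intermediate step satisfies the $p$-norm constraint of \textbf{$p$-Fine-graining}; once that bookkeeping is in place the argument is purely mechanical. The density step is also where the exponent $p$ enters nontrivially, but since $x \mapsto x^{1/p}$ is a homeomorphism of $[0,\infty)$ for every real $p \ge 1$, density is unproblematic, and the pathologies flagged in the footnote about linear norm-preserving maps never surface, because we only ever use the abstract fine-graining axiom and never need to exhibit an explicit transformation $T$.
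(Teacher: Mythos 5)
Your proposal is correct and takes essentially the same route as the paper's own (sketched) proof: strip the phases by a trivial \textbf{$p$-Fine-graining}, pass to the common denominator $d=\prod_{i=1}^n b_i$, split each coefficient into $a_i'$ copies of $(1/d)^{1/p}$, evaluate the resulting symmetric game with Lemma \ref{thm:symmetry}, and extend to arbitrary complex coefficients by density plus \textbf{Continuity}. The bookkeeping you flag -- peeling off one world of modulus $(1/d)^{1/p}$ at a time, each step satisfying the $p$-norm constraint -- is precisely how the iterated fine-graining works in the $p=2$ case the paper spells out in Theorem \ref{thm:2born}, so there is no gap.
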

\begin{proof}
The proof is similar to the Many-Worlds case, so we shall only sketch it: consider a game $G = (\boldsymbol{\alpha},\mathbf{r})$ with coefficients $\alpha_j = \de{\frac{a_j}{b_j}}^{\frac1p}e^{i\theta_j}$, do a trivial \textbf{$p$-Fine-graining} to get rid of the phase, rewrite the coefficients as 
$\de{\frac{a_i'}{d}}^{\frac1p}$ for $d=\prod_{i=1}^n b_i$ and $a_i' = da_i/b_i$, fine-grain it into a game with $a_i'$ coefficients equal to $\de{1/d}^{\frac1p}$ for each outcome $i$, use Lemma \ref{thm:symmetry} to conclude that the value of the fine-grained game is given by equation \eqref{eq:pborn}, and use \textbf{Continuity} to show that this formula is valid for any game.
\end{proof}

Notice that not only the proof doesn't work for the case of the $\max$-norm (often referred to as $p=\infty$), but the result is also false, as it is not possible to fine-grain all games into symmetric games via transformations that preserve the $\max$-norm. For example, any fine-graining of the game 
\be G = \begin{pmatrix} 1 & r_1 \\ 2 & r_2 \end{pmatrix} \ee
will have at least one coefficient $1$ associated to reward $r_1$ and at least one coefficient $2$ associated to reward $r_2$. One could try, nevertheless, to arbitrarily define
\be V(G) = \lim_{p\to\infty}\frac{1}{\norm{\boldsymbol{\alpha}}^p_p}\sum_{i=1}^n |\alpha_i|^p r_i = \frac{1}{\#M}\sum_{i\in M}r_i,\ee
where $M$ is the set of $i$ such that $|\alpha_i| = \max_j |\alpha_j|$. This $\max$-Born rule would be consistent with \textbf{Constancy}, \textbf{Dominance}, \textbf{Additivity}, and \textbf{Indifference}, but not with \textbf{Continuity}.

\section{Objective probabilities}\label{sec:objectiveprobabilities}

\begin{flushright}
\small \hspace{0.30\textwidth} \textit{We do not accept that the behaviour of a rational decision maker should play a role in modelling physical systems -- Richard Gill \cite{gill03}}.
\end{flushright}

Up to now the discussion has been exclusively about subjective probabilities. We have argued only that it would be irrational to assign probabilities different than those given by the generalised Deutsch-Wallace theorem because one would unjustifiably break the symmetries of the theory. We haven't argued, though, that it would be irrational to assign different probabilities because they would predict different relative frequencies. Such an argument would be suspect: why should these explicitly subjective probabilities be connected to the clearly objective relative frequencies? Shouldn't relative frequencies be connected to objective probabilities instead?

The situation is not as bad as it sounds, because although subjective, these probabilities depend only on the coefficients of the game, which are objective, and their derivation is done from defensible rationality axioms together with facts about the world and the theory describing it; it would be rather surprising if these subjective probabilities turned out to be completely fantastic. In fact, such well-grounded subjective probabilities are even taken to be equal to the objective probabilities by Lewis' Principal Principle \cite{lewis80}, which states that 
\be \text{Pr}_o(E) = \text{Pr}_s(E|HT),\ee
that is, the objective probability of an event $E$ is equal to the subjective probabilities assigned to $E$ by a rational agent that knows the history of the world up to this point $H$ and the theory that describes the world $T$.

This is still a bit unsatisfactory, because the relationship between objective probabilities and relative frequencies shouldn't depend on the opinions of rational agents or even their existence. We believe, after all, that radioactive elements have been decaying on Earth much before humans appeared to reason about them, and it would be absurd to assume that the relative frequencies budged at all when we arrived.

We can do better, though. We shall propose a definition of objective probability in terms of the proportion of worlds in which an event is realised, and show that the connection of these objective probabilities with relative frequencies is mathematically identical to the one in single-world theories. In this way they fulfil the main role\footnote{We talk about the roles objective probabilities play instead of the definition of objective probabilities because this is the best we can do. Unlike subjective probabilities, there is no widely accepted definition of objective probability that we could try to satisfy.} objective probabilities must play, as defended by Saunders in Ref.~\cite{saunders10}.

This connection, in single-world theories, is given by the law of large numbers, that says roughly that for a large number of trials the relative frequencies will be close to the objective probability, with high objective probability\footnote{One often hears a different story, that in the limit of an infinite number of trials the relative frequency is \emph{equal} to the objective probability, full stop. This is simply mistaken, but given that the mathematics we present here are identical, those that want to insist on this mistake can do it equally well in many-world theories.}. More precisely, if the objective probability of observing some event is $p$, then after $N$ trials the objective probability that the frequency $f_N$ will be farther than $\varepsilon$ away from $p$ is upperbounded by a function that is decreasing in $N$ and $\varepsilon$, that is, that
\be \text{Pr}_o(|f_N-p|>\varepsilon) \le 2e^{-2N\varepsilon^2}\ee
where the precise form of the upper bound is not relevant, but for concreteness we used the one given by the Hoeffding inequality.

To see how this works in many-world theories, we shall first consider how the relative frequencies behave in Kent's universe, and then generalise. Consider then that one is collecting relative frequencies by performing $N$ repetitions of a $n$-outcome measurement, where $m_i$ worlds are created with outcome $i$. After $N$ trials there are $n^N$ sets of worlds, each set identified by a sequence of outcomes $\mathbf{s} = (s_1,\ldots,s_N)$, with $s_i \in \{1,\ldots,n\}$. The number of worlds where a sequence $\mathbf{s}$ is obtained is clearly given by
\be
\#_\mathbf{s} = \prod_{i=1}^N \#_{s_i} = \prod_{i=1}^n m_i^{k_i},
\ee
where $k_i$ is the number of outcomes $i$ in $\mathbf{s}$. If we define then the proportion of worlds with outcome sequence $\mathbf{s}$ as
\be 
\sharp_\mathbf{s} := \frac{\#_\mathbf{s}}{\sum_\mathbf{s'} \#_\mathbf{s'}},
\ee
then the proportion of worlds where the relative frequency $f_N$ of outcome $1$ is $k/N$ is
\be \sharp(k,N) = \binom{N}{k} \sharp_1^k (1-\sharp_1)^{N-k}, \ee
which is formally identical to the binomial distribution, and therefore allows us to prove a law of large numbers for the proportion of worlds, which says that in most worlds the relative frequency of outcome $1$ will be close to $\sharp_1$, or more precisely that
\be
\sharp\Big( |f_N - \sharp_1| \ge \varepsilon \Big) \le 2e^{-2N\varepsilon^2}.
\ee

This suggests that we could try identifying objective probabilities with the proportion of worlds in a general many-worlds theory. Consider then that in such a theory one is collecting frequencies by performing $N$ repetitions of a $n$-outcome measurement as in Kent's universe. Now the branching coefficients are not the number of worlds created with each outcome, however, and as argued in section \ref{sec:manyworlds} we do not think that it is possible to assign a sensible world count. We can, however, assign a sensible world \emph{measure}\footnote{Which does include the counting measure as a particular case, so we are by no means assuming that worlds cannot be counted.}: a function $\Lambda$ that assigns a real number $\Lambda_\mathbf{s} \ge 0$ to the set of worlds with outcome sequence $\mathbf{s}$, and defines the measure of a set of worlds with outcome sequences in some set $S$ to be
\be
\Lambda(S) = \sum_{\mathbf{s} \in S} \Lambda_\mathbf{s}.
\ee 
Since the branchings are completely independent, it is natural to postulate that 
\be
\Lambda_\mathbf{s} = \prod_{i=1}^N \Lambda_{s_i},
\ee
that is, that the measure of the set of worlds with outcomes $\mathbf{s}$ is the \emph{product} of the single-trial measures, and this is enough to show that in most worlds the relative frequencies will be right. Defining the proportion of worlds with outcome sequence $\mathbf{s}$ to be
\be 
\lambda_\mathbf{s} := \frac{\Lambda_\mathbf{s}}{\sum_\mathbf{s'} \Lambda_\mathbf{s'}},
\ee
it follows that the proportion of worlds with relative frequency $k/N$ of outcome $1$ is
\be \lambda(k,N) = \binom{N}{k} \lambda_1^k (1-\lambda_1)^{N-k}, \ee
which is again formally identical to the binomial distribution, and thus the law of large numbers for the proportion of worlds follows.

This implies that in any many-worlds theory with such a product measure it is eminently reasonable to count relative frequencies, and use them to guess which is the proportion of worlds with a given outcome: in most worlds one will be right.

This works perfectly well in Many-Worlds, if we take the measure of worlds to be the one suggested by the Born rule, or, more generally, the $p$-Born rule. To see that, consider again the $N$ repetitions of the $n$-outcome measurement, which in Many-Worlds is represented by the transformation
\be
\bigotimes_{i=1}^N \sum_{j=1}^n \alpha_j\ket{j}\ket{M_?^i} \mapsto \ket{w} = \bigotimes_{i=1}^N \sum_{j=1}^n \alpha_j\ket{j}\ket{M_j^i},
\ee
where the final state $\ket{w}$ can be written as the superposition of the $n^N$ sets of worlds
\be 
\ket{w} = \sum_{\mathbf{s}} \ket{w_\mathbf{s}} = \sum_{\mathbf{s}}  \prod_{i=1}^N\alpha_{s_i}\bigotimes_{j=1}^N \ket{s_j}\ket{M_{s_j}^j}.
\ee
If we then postulate the measure of the set of worlds $\ket{w_\mathbf{s}}$ to be
\be\label{eq:measurep}
\Lambda_\mathbf{s}  := \norm{\ket{w_\mathbf{s}}}_p^p,
\ee
then it does decompose as a product of the single-trial measures, as required. Defining again the proportion of worlds with outcomes $\mathbf{s}$ to be
\be 
\lambda_\mathbf{s} := \frac{\Lambda_\mathbf{s}}{\sum_\mathbf{s'} \Lambda_\mathbf{s'}},
\ee
we see that the proportion of worlds with relative frequency $f_N = k/N$ is again given by the binomial distribution, from which the law of large numbers for the proportion of worlds again follows.

In summary, if the correct way to measure the proportion of worlds is via the $p$-norm, then in most worlds we will observe frequencies conforming to the $p$-Born rule. Conversely, if we observe relative frequencies conforming to the $2$-Born rule, we should bet that the correct way to measure the proportion of worlds is via the $2$-norm.

\section{Multiplying or splitting}

\begin{figure}[ht]
\centering
\includegraphics{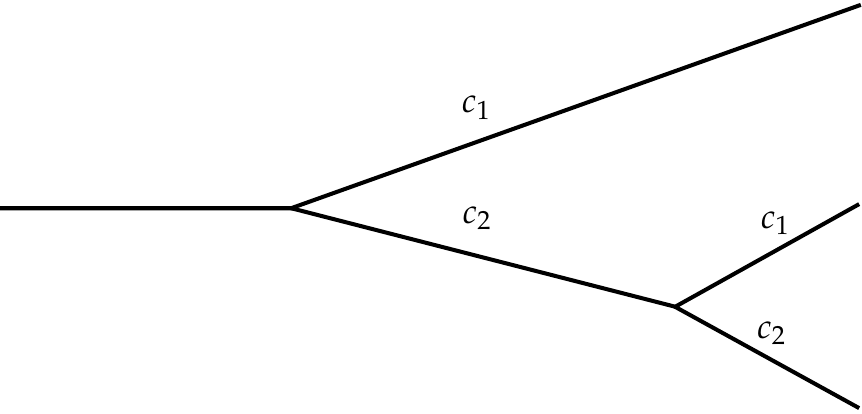}
\caption{An agent performs a measurement described by the coefficients $c_1$ and $c_2$. In the worlds where outcome $2$ is obtained, the agent performs the same measurement again.}\label{fig:onceortwice}
\end{figure}

The analogy between Kent's universe and Many-Worlds is not perfect, however, and it breaks down in the Once-or-Twice scenario shown in Fig.~\ref{fig:onceortwice} (discussed by Wallace in Ref.~\cite{wallace12} and Sebens and Carroll in Ref.~\cite{sebens14}). In it, an agent performs a measurement described by the coefficients $c_1$ and $c_2$, and in the worlds where outcome $2$ is obtained they perform the same measurement again. What are the subjective and objective probabilities of the sequences of outcomes $(1)$, $(2,1)$, and $(2,2)$?

First we shall consider the objective probabilities. In Kent's universe it seems clear what they are: there are in total $c_1+c_1c_2+c_2^2$ worlds, of which $c_1$ have outcome $(1)$, $c_2c_1$ have outcomes $(2,1)$, and $c_2^2$ have outcomes $(2,2)$, so the proportions are
\begin{subequations}\label{eq:multiplyingmeasure}
\begin{gather}
 \sharp_{1} = \frac{c_1}{c_1+c_1c_2+c_2^2}, \\
 \sharp_{21} = \frac{c_2c_1}{c_1+c_1c_2+c_2^2}, \\
 \sharp_{22}  = \frac{c_2^2}{c_1+c_1c_2+c_2^2}.
\end{gather}
\end{subequations}
If we measure frequencies, which are after all what objective probabilities should be connected to, then in most worlds they will be close to these proportions.

In Many-Worlds (and more generally in the hypothetical $p$-theories), however, Once-or-Twice is described by the transformation
\begin{multline}
(c_1\ket{1}+c_2\ket{2})\ket{M_?}(c_1\ket{1}+c_2\ket{2})\ket{M_?} \mapsto \\  c_1\ket{1}\ket{M_1}(c_1\ket{1}+c_2\ket{2})\ket{M_?} + c_2c_1\ket{2}\ket{M_2}\ket{1}\ket{M_1} + c_2^2\ket{2}\ket{M_2}\ket{2}\ket{M_2},
\end{multline}
and the measure of worlds is given by equation \eqref{eq:measurep}, which tells us that
\begin{subequations}\label{eq:measure}
\begin{gather}
 \lambda_{1}  = \frac{|c_1|^p}{|c_1|^p+|c_2|^p},\\
 \lambda_{21} = \frac{|c_2|^p|c_1|^p}{(|c_1|^p+|c_2|^p)^2}, \\
 \lambda_{22} = \frac{|c_2|^{2p}}{(|c_1|^p+|c_2|^p)^2}, 
\end{gather}
\end{subequations}
a fundamentally different result. If $c_1=c_2$, for example, $\sharp_1 = 1/3$, but $\lambda_1 = 1/2$.

What is the source of this disconnect? Our proposal is that while in Kent's universe the number of worlds is quite literally being multiplied upon branching, this is clearly not the case in Many-Worlds, as branching increases neither number of particles nor energy. In fact, the total measure of worlds, as given by equation \eqref{eq:measurep}, is conserved. A closer analogue to Many-Worlds would be a reversed version of Kent's universe where branching conserves the number of worlds. There the computer simulation starts with a large number of worlds, and when a measurement is made each outcome is imprinted in a subset of the worlds, with the relative sizes of these subsets determined by the coefficients. In the Once-or-Twice case, it suffices to start with $(c_1+c_2)^2$ worlds. After the first branching $(c_1+c_2)c_1$ are imprinted with outcome $1$, and $(c_1+c_2)c_2$ with outcome $2$. Out of the $(c_1+c_2)c_2$ with outcome $2$ a further $c_2c_1$ receive a $1$, and $c_2^2$ receive another $2$. The proportions of worlds are given by
\begin{subequations}\label{eq:splittingmeasure}
\begin{gather}
 \sharp'_{1} = \frac{c_1}{c_1+c_2}, \\
 \sharp'_{21} = \frac{c_2c_1}{(c_1+c_2)^2}, \\
 \sharp'_{22} = \frac{c_2^2}{(c_1+c_2)^2}, 
\end{gather}
\end{subequations}
now matching Many-Worlds\footnote{It should be clear that neither the regular nor the reverse Kent's universe are realistic analogues of Many-Worlds: both of them require an exponential number of worlds, either in the end or in the beginning of the simulation.}.

Now turning to the subjective probabilities, we face a difficulty, because the generalised Deutsch-Wallace theorem proved here deals only with simple measurements, not sequences of measurements. The original versions by Deutsch and Wallace do attribute probabilities to sequences of measurements, though, through an axiom that Deutsch called substitutibility and Wallace called diachronic consistency. It says that the value of a game does not change if one of its rewards is replaced by a game of equal value, so if an agent attributes value $V(G)$ to some game $G$, they really must be indifferent between receiving reward $V(G)$ or playing $G$, even when this reward would be given as part of another game. More formally, it is
\begin{itemize}
 \item \textbf{Substitution} \textit{Let $G = (\mathbf{c},\mathbf{r})$ be a game, and $G' = (\mathbf{c}',\mathbf{r}')$ another game such that $V(G') = r_1$. Then the sequential game where an agent plays $G$ but instead of receiving reward $r_1$ they play $G'$ has value equal to $V(G)$.}
\end{itemize}
With it, we can prove that the subjective probabilities in Many-Worlds match the proportion of worlds from equations \eqref{eq:measure}, as expected, but \textbf{Substitution} also implies that in both the normal and the reverse Kent's universe the subjective probabilities match those from equations \eqref{eq:splittingmeasure}. Here the Principal Principle raises a red flag, as in the regular Kent's universe they should match equations \eqref{eq:multiplyingmeasure} instead.

What went wrong? Well, \textbf{Substitution} implies that, in the case where $c_1=c_2=1$, one should be indifferent between playing a game where one future self gets reward $r_1$ and another reward $r_2$, and another game where one future self gets reward $r_1$ but two get reward $r_2$. If all these future selves are equal, as is the case in Kent's universe, one should definitely not be indifferent! One should prefer the second game, and assign probability $1/3$ to each outcome\footnote{Sebens and Carroll claim, however, that an agent that assigns these probabilities can be Dutch-booked \cite{sebens14}: After the first measurement, but before the second, an agent that is ignorant of the outcome will assign probability $1/2$ to being in each world, and will therefore accept a bet that pays $3$€ in the world with outcome $1$ and $-3$€ in the world with outcome $2$. After the second measurement, the agent will assign probability $1/3$ to being in each world, and will therefore accept a bet that pays $-4$€ in the world with a single outcome $1$ and $2$€ in the worlds with outcomes $(2,1)$ and $(2,2)$. If the agent accepts both bets, then in all worlds they will lose $1$€.

The problem with this argument is that the agent would not have accepted the first bet if they knew that they would be multiplied in the world with outcome $2$: the assigment of probability $1/2$ was a mistake, that the agent corrected after learning of the second multiplication. Given that they accepted the incorrect bet, however, the agent knows that accepting the second, fair, bet will cause them to be Dutch-booked, so they would reject it. Note, furthermore, that making two bets about the same situation using two different probabilities generically leads to Dutch-booking, even in single-world theories. Consider an agent that believes a coin to be fair, and therefore accepts a bet that pays $3$€ if heads, and $-3$€ if tails. If the agent changes their mind, and decides instead that the coin has probability $2/3$ of coming up tails, and accepts a bet that pays $-4$€ if heads and $2$€ if tails, then the agent will lose $1$€ independently of the result of the coin flip.

This raises the question of how a realistic agent in Kent's universe could ever assign probabilities, since they would depend on the whole tree of future branchings. The agent could simply recognize that it is not possible to determine the objective probabilities, and use only the part of the branching tree that they can foresee to calculate their subjective probabilities. Analogously, a grandparent could decide to divide their inheritance among their children weighted by how the number of grandchildren they begat, but refuse to speculate about how many children each grandchild will have.}. This is also the answer one obtains from Elga's indifference principle \cite{elga04}.

We should, therefore, reject \textbf{Substitution} as a rationality principle valid for general many-world theories, as it seems valid only for those where the total measure of worlds is conserved upon branching. As a substitute, in both the regular and the reverse Kent's universe we can unproblematically follow the objective probabilities, and say that a sequential game is equivalent to a simple game where the same rewards are given in the same number of worlds. In Many-Worlds the objective probabilities are not well-established, so instead we argue that different ways of doing a measurement are equivalent, as done in section \ref{sec:manyworlds}. Specifically, we say that measuring a state $\ket{\psi}$ and then a state $\ket{\varphi}$ in the worlds with outcome $2$ is equivalent to doing a joint measurement on $\ket{\psi}\ket{\varphi}$ and coarse-graining the outcomes of the measurement on $\ket{\varphi}$ in the worlds where the outcome of the measurement on $\ket{\psi}$ was different than $2$. This allows us to derive \textbf{Substitution} as a theorem in Many-Worlds and in the reverse Kent's universe, but not in the regular Kent's universe. This is done in detail in Appendix \ref{sec:reduction}.

In all many-world theories, these arguments imply that the agents should be indifferent to the number of branchings, caring only about which coefficients are associated to which rewards, thus satisfying the Principal Principle.

\section{Conclusion}

We investigated how subjective and objective probabilities work in many-world theories.

With respect to subjective probabilities, we generalised the Deutsch-Wallace theorem, and showed how it can be broken down in three parts: the first is a decision-theoretical core common to both single-world and many-world theories, that says that rational agents should use probabilities to make their decisions. The second part is a symmetry argument very natural for many-world theories, but less so for single-world ones, that implies that rational agents should assign uniform probabilities to physically symmetric games. The last part is a fine-graining argument used to reduce arbitrary games to symmetric ones. This reduction takes different forms depending on the precise physics of the many-worlds theory in question: in Kent's universe it implies that the probabilities come from the $1$-norm, while in Many-Worlds it implies that the probabilities come from the $2$-norm. While in Kent's universe the fine-graining is motivated explicitly from many-worlds considerations, the Many-Worlds version depends only on operational assumptions, so if one can stomach the symmetry argument, this derivation of the Born rule can also be considered valid for single-world versions of quantum mechanics.

With respect to objective probabilities, we have argued that they should be identified with the proportion of worlds in which an event happens, and showed that for any measure of worlds that has a product form an analogue of the law of large numbers can be proven: in most worlds the relative frequency will be close to the proportion of worlds. In Kent's universe, the motivating example, this argument can tell us what the objective probabilities are, as it is obvious how to measure the proportion of worlds. Furthermore, one can use the Principal Principle as a consistent check, and note that they match the subjective probabilities from the generalised Deutsch-Wallace theorem.

In Many-Worlds, however, it is not clear how to measure the proportion of worlds, so this argument cannot be used to derive the objective probabilities. It could be used in the other direction, though: if one takes both the Deutsch-Wallace theorem and the Principal Principle as true, then one could show that worlds should be measured via the $2$-norm. Alternatively, the argument can simply be used to say that if you want to find out what the proportion of worlds is, gathering relative frequencies is a good idea, as in most worlds you'll get the right answer.

We have also shown that the analogy between Many-Worlds and Kent's universe breaks down in the Once-or-Twice scenario, and used this breakdown to argue that a key rationality principle used in the original version of the Deutsch-Wallace theorem -- that when playing a sequential game agents should be indifferent between receiving a reward or playing a subgame with the same value -- is not valid in general many-world theories, and thus one should rely on other arguments to calculate the value of sequential games.

\section*{Acknowledgments}

We thank Koenraad Audenaert, Časlav Brukner, Eric Cavalcanti, Fabio Costa, Daniel Süß, David Gross, Markus Heinrich, Philipp Höhn, Felipe M. Mora, Jacques Pienaar, Simon Saunders, and David Wallace for useful discussions. This work has been supported by the Excellence Initiative of the German Federal and State Governments (Grant ZUK 81).

\bibliographystyle{linksen}
\bibliography{biblio}

\appendix

\section{Consistent fine-graining}\label{sec:consistency}

For which norms can the fine-graining argument work?

There are multiple ways to fine-grain a game with unequal coefficients into a symmetric game. For example, if $\mu = \big\|\big(\norm{\mathbf{1}^{(n)}},\norm{\mathbf{1}^{(m)}}\big)\big\|$, where $\mathbf{1}^{(n)}$ is the vector with $n$ ones, one can fine-grain the game
\be G = \begin{pmatrix} 1 & r_1 \\ \mu & r_2 \end{pmatrix} \ee
either by first taking it to
\be G' = \begin{pmatrix} 1 & r_1 \\ \norm{\mathbf{1}^{(n)}} & r_2 \\ \norm{\mathbf{1}^{(m)}} & r_2 \end{pmatrix}, \ee
and then applying two more fine-grainings to take it to
\be G'' = \begin{pmatrix} 1 & r_1 \\ \mathbf{1}^{(n)} & \mathbf{r}_2^{(n)} \\ \mathbf{1}^{(m)} & \mathbf{r}_2^{(m)} \end{pmatrix}, \ee
or by taking $G$ directly to $G''$, which will be possible only if $\mu = \norm{\mathbf{1}^{(n+m)}}$. We want all possible ways of fine-graining a game to give same result, so we demand the norm to be such that for all vectors $\mathbf{v}$ and $\mathbf{w}$ with disjoint support
\be\norm{\mathbf{v} + \mathbf{w}} = \big\|\big(\norm{\mathbf{v}},\norm{\mathbf{w}}\big)\big\|.\ee
We also demand the norm to be permutation-invariant, as it seems unphysical to attribute meaning to the labelling of the vectors, and that $\norm{(1,1)} \neq 1$, because otherwise $\norm{\mathbf{1}^{(n)}}=1$ for all $n$, and it is therefore impossible to fine-grain any non-trivial game.

These conditions are enough to show that these norms must be equivalent to $p$-norms when restricted to vectors of rational numbers, as can be seen by adapting an argument by Bohnenblust \cite{bohnenblust40}. We have then
\begin{theorem}
Let $\|\cdot\|:\mathbb C^N \to \mathbb R$ be a permutation-invariant norm for $N\ge 3$ such that $\norm{(1,1)} \neq 1$ and
\be\norm{\mathbf{v} + \mathbf{w}} = \big\|\big(\norm{\mathbf{v}},\norm{\mathbf{w}}\big)\big\|\ee
for all vectors $\mathbf{v},\mathbf{w}$ with disjoint support. Then for any vector $\mathbf{c}$ such that the absolute value of its components is rational, \be\norm{\mathbf{c}} = \Big(\sum_i |c_i|^p\Big)^{\frac1p},\ee
for some real number $p \ge 1$.
\end{theorem}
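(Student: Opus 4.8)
The plan is to reduce the entire statement to understanding a single binary operation on the non-negative reals. First I would note that the disjoint-support law, applied with $\mathbf{w}=\mathbf{0}$, gives $\norm{\mathbf{e}_1} = \norm{(\norm{\mathbf{e}_1},0)} = \norm{\mathbf{e}_1}^2$, so $\norm{\mathbf{e}_1}=1$ and appending zeros never changes a norm. Splitting off one coordinate at a time then shows that $\norm{\mathbf{c}}$ depends only on the moduli $\abs{c_i}$, so the norm is \emph{absolute}; combined with the triangle inequality this makes it monotone in each $\abs{c_i}$ (write $(1,s)$ as the convex combination $\theta(1,t)+(1-\theta)(1,-t)$ with $\theta = (1+s/t)/2$ for $0\le s\le t$, and use $\norm{(1,-t)}=\norm{(1,t)}$). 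I would then define, for $a,b\ge 0$, the operation $a\oplus b := \norm{(a,b)}$, which is commutative by permutation-invariance, positively homogeneous, has $0$ as identity, is non-decreasing in each argument, and --- crucially, this is where $N\ge 3$ enters --- is associative, since grouping the three-component vector $(a,b,c)$ in two ways gives $(a\oplus b)\oplus c = \norm{(a,b,c)} = a\oplus(b\oplus c)$.

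Next I would pin down the growth of $\sigma_n := \norm{\mathbf{1}^{(n)}}$. Partitioning $nm$ ones into $n$ blocks of $m$ and using homogeneity gives complete multiplicativity, $\sigma_{nm}=\sigma_n\sigma_m$. Monotonicity forces $\sigma_n$ to be non-decreasing, and the hypothesis $\norm{(1,1)}\neq 1$ upgrades $\sigma_2\ge\sigma_1=1$ to $\sigma_2>1$; a standard sandwiching of $n^k$ between consecutive powers of $2$ then yields $\sigma_n = n^{1/p}$ for the constant $p := \log 2/\log\sigma_2$, with $p\ge 1$ because subadditivity gives $\sigma_2\le 2$.

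It remains to identify $\oplus$ itself. Dividing the identity $m^{1/p}\oplus n^{1/p} = \sigma_{m+n} = (m+n)^{1/p}$ by $m^{1/p}$ and using homogeneity gives $1\oplus t = (1+t^p)^{1/p}$ for every $t$ of the form $(n/m)^{1/p}$; since these are dense in $[0,\infty)$ and the norm is continuous, the identity $1\oplus t=(1+t^p)^{1/p}$ holds for all $t\ge 0$, whence $a\oplus b = (a^p+b^p)^{1/p}$ by homogeneity. Associativity then gives $\norm{(a_1,\dots,a_n)} = \de{\sum_i a_i^p}^{1/p}$ for non-negative integers $a_i$, and clearing a common denominator by homogeneity extends this to every vector whose moduli are rational, as claimed.

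The main obstacle is this last identification: the privileged values $n^{1/p}$ on which the formula for $\oplus$ is exact are themselves generically irrational, so reaching a genuinely rational argument $b/a$ is impossible by algebra alone and really does require the density argument together with continuity of the norm. The steps before it --- absoluteness, monotonicity, and the power law $\sigma_n=n^{1/p}$ --- are routine once associativity is in hand, and it is worth recording that the excluded case $\norm{(1,1)}=1$ is exactly the $\max$-norm, for which $\sigma_n\equiv 1$ and no fine-graining is possible.
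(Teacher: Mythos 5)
Your proof is correct, and while it shares the paper's overall architecture --- multiplicativity and monotonicity on all-ones configurations, a sandwich of $n^k$ between consecutive powers to get $\sigma_n = n^{1/p}$, then descent to rationals by homogeneity --- it differs at exactly the place where care is most needed, and is the better for it. One presentational wrinkle first: your $\sigma_n := \norm{\mathbf{1}^{(n)}}$ is not literally defined in $\mathbb{C}^N$ once $n > N$; you should define $\sigma_n$ as the $n$-fold $\oplus$-power of $1$, which your associativity observation (itself the one place $N\ge3$ enters, via three-component regroupings) makes well-defined --- this is precisely the role played by the paper's recursively defined $f(n+1) := \norm{(1,f(n))}$, which keeps every vector within length three. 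The substantive difference is the endgame: the paper evaluates $\norm{(n,m)}$ by writing it as $\norm{(f(n^p),f(m^p))} = f(n^p+m^p)$, even though $n^p$ and $m^p$ are generically non-integers, so $f$ is not defined at those arguments; filling that hole requires exactly the density-plus-continuity argument you give, and your closing observation that rational arguments cannot be reached ``by algebra alone'' correctly identifies this soft spot in the paper's own write-up. Your appeal to continuity costs nothing, since any norm on $\mathbb{C}^N$ is Lipschitz with respect to the $1$-norm by the triangle inequality ($\norm{\mathbf{x}} \le \sum_i \abs{x_i}\norm{\mathbf{e}_i}$), so no extra hypothesis is smuggled in --- indeed your route actually establishes the formula for \emph{all} complex vectors, rendering the paper's closing remark about needing ``some regularity condition'' unnecessary. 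Two further small gains: you derive $p\ge1$ from subadditivity via $\sigma_2\le2$, which the theorem statement asserts but the paper's proof never checks, and your identification of the excluded case $\norm{(1,1)}=1$ with the $\max$-norm nicely complements the paper's discussion of why fine-graining fails at $p=\infty$.
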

\begin{proof}
Let $f$ be such that $f(1):=\norm{1}$ and $f(n+1) := \norm{(1,f(n))}$. First note that $f(1) = 1$, as $\norm{1} = \norm{(1,0)} = \norm{(\norm{(1,0)},\norm{(0,0)})} = \norm{(1,0)}^2$. 

We need to show that $f(n)$ is monotonous, and that $f(n^k) = f^k(n)$. For the former, consider the identity $2(f(n),0) = (f(n),1) + (f(n),-1)$ and take the norm of both sides. By the triangle inequality
\be 2f(n) \le \norm{(f(n),1)} + \norm{(f(n),-1)} = 2f(n+1).\ee
For the latter, first we show that $f(n+m) = \norm{(f(m),f(n)}$. Assume that it holds for some $m$. Then 
\be f(n+1+m) = \norm{(f(m),f(n+1))} = \norm{(f(m),1,f(n))} = \norm{(f(m+1),f(n))}.\ee
Since it holds for $m=1$, by induction it holds for all $m$. Now assume that $f(nm) = f(n)f(m)$ holds for some $m$. Then 
\be 
f(n(m+1)) = f(nm+n) = \norm{(f(nm),f(n))} = f(n)\norm{(f(m),1)} = f(n)f(m+1),
\ee
and therefore by induction this is true for all $m$, as it obviously holds for $m=1$. This implies that $f(n^k) = f^k(n)$.

Now let $m,n\ge 2$ be some fixed integers, and $h$ the integer such that for any positive integer $k$
\be
m^h \le n^k < m^{h+1}.
\ee
Applying $f(\cdot)$ to these numbers, it follows that
\be
h\log f(m) \le k \log f(n) \le (h+1) \log f(m),
\ee
and using the fact that $h \le k \frac{\log n}{\log m}$ and $h > k \frac{\log n}{\log m}-1$ we have that
\be
\frac{\log f(m)}{\log m}-\frac{\log f(m)}{k\log n} < \frac{\log f(n)}{\log n} \le \frac{\log f(m)}{\log m}+\frac{\log f(m)}{k\log n}.
\ee
Taking the limit of $k$ going to infinity lets us conclude that 
\be
\frac{\log f(m)}{\log m} = \frac{\log f(n)}{\log n},
\ee
which means that this fraction is a constant independent of $n$ (and different than $0$ as $f(2)>1)$. Calling this constant $1/p$, we conclude that
$$f(n) = n^\frac1p.$$
Now for any rational number $m/n$ we have that
$$\|(1,m/n)\| = \frac1n\|(n,m)\| = \frac1n\|(f(n^p),f(m^p)\| = \frac1nf(n^p+m^p) = (1+(m/n)^p)^\frac1p,$$
so by homogeneity $\norm{(a,b)} = (|a|^p + |b|^p)^{\frac1p}$ for any rationals $|a|$ and $|b|$, and by induction for any vector $\mathbf{c}$ such that the absolute values of the components are rational numbers 
\be\norm{\mathbf{c}} = \Big(\sum_i |c_i|^p\Big)^{\frac1p}.\ee
\end{proof}
If one furthermore assumes some regularity condition, then the result is valid for any complex vector. 

\section{Reducing sequential games to simple games}\label{sec:reduction}

Consider the sequential game
\be
G = \begin{pmatrix}
 c_1 & r_1 \\
 c_2 & d_1 & s_1 \\
& d_2 & s_2 
 \end{pmatrix},
\ee
 where in the worlds with outcome $1$ reward $r_1$ is given, but in the worlds with outcome $2$ the subgame $H = \begin{pmatrix} d_1 & s_1 \\ d_2 & s_2 \end{pmatrix}$ is played. We want to reduce it to a simple game in both Kent's universes and Many-Worlds.
 
 In the regular Kent's universe $c_1$ worlds are created with reward $r_1$, $c_2d_1$ worlds are created with reward $s_1$, and $c_2d_2$ worlds are created with reward $s_2$, so it seems natural to postulate that $G$ is equivalent to the simple game
 \be 
 G' = \begin{pmatrix}
 c_1 & r_1 \\
 c_2d_1 & s_1 \\
 c_2d_2 & s_2 
 \end{pmatrix}.
 \ee
 In the reverse Kent's universe, there are $M(c_1+c_2)(d_1+d_2)$ worlds in the beginning. After the first branching, $Mc_1(d_1+d_2)$ worlds are imprinted with outcome $1$, and the remaining $Mc_2(d_1+d_2)$ worlds are split again, with $Mc_2d_1$ being imprinted with a further outcome $1$, and $Mc_2d_2$ with outcome $2$. In the end there are $Mc_1(d_1+d_2)$ with reward $r_1$, $Mc_2d_1$ worlds with reward $s_1$, and $Mc_2d_2$ worlds with reward $s_2$, so it seems natural to postulate that $G$ is equivalent to the simple game
 \be 
 G'' = \begin{pmatrix}
 c_1(d_1+d_2) & r_1 \\
 c_2d_1 & s_1 \\
 c_2d_2 & s_2 
 \end{pmatrix}.\ee
Note that in the reverse Kent's universe \textbf{Substitution} is satisfied: the value of the subgame $H$ is 
\be
V(H) = \frac{1}{d_1+d_2}(d_1s_1+d_2s_2),
\ee
and the value of $G$ there is
\begin{align}
V(G) &= \frac{1}{(c_1+c_2)(d_1+d_2)}\Big(c_1(d_1+d_2)r_1 + c_2(d_1s_1 + d_2s_2) \Big)  \\ &= \frac{1}{c_1+c_2}(c_1r_1+c_2V(H)),
\end{align}
which is equal to the value of $\begin{pmatrix} c_1 & r_1 \\ c_2 & V(H) \end{pmatrix}$, as required. We shall not prove the general case, as that is quite straightforward.

In Many-Worlds, the game $G$ is instantiated by making a measurement on the state $\ket{\psi} = c_1\ket{1} + c_2\ket{2},$ giving reward $r_1$ in the worlds with outcome $1$, and in the worlds with outcome $2$ doing a measurement on the state $\ket{\varphi} = d_1\ket{1} + d_2\ket{2},$ finally giving rewards $s_1$ and $s_2$ in the worlds with the second outcomes $1$ and $2$. These measurements take the initial state $\ket{\psi}\ket{M_?}\ket{\varphi}\ket{D_?}$ to the final state
\be
\ket{G} = c_1\ket{1}\ket{M_1}\ket{\varphi}\ket{D_?}\ket{r_1} + c_2d_1\ket{2}\ket{M_2}\ket{1}\ket{D_1}\ket{r_2} + c_2d_2\ket{2}\ket{M_2}\ket{2}\ket{D_2}\ket{r_3}.
\ee
An equivalent way to play this game is to make a joint measurement on the state
\be
\ket{\psi}\ket{\varphi} = c_1d_1\ket{1}\ket{1} + c_1d_2\ket{1}\ket{2} + c_2d_1\ket{2}\ket{1} + c_2d_2\ket{2}\ket{2},
\ee
but in the worlds where the measurement on $\ket{\psi}$ resulted in $1$ apply the label $?$ to both outcomes of the measurement on $\ket{\varphi}$, leading to the final state
\begin{multline}
 \ket{G'} = c_1d_1\ket{1}\ket{M_1}\ket{1}\ket{D_?'}\ket{r_1} + c_1d_2\ket{1}\ket{M_1}\ket{2}\ket{D_?''}\ket{r_1}
 \\+  c_2d_1\ket{2}\ket{M_2}\ket{1}\ket{D_1}\ket{s_1} + c_2d_2\ket{2}\ket{M_2}\ket{2}\ket{D_2}\ket{s_2}
\end{multline}
where $\ket{D_?'}$ and $\ket{D_?''}$ are measurements results physically distinct from $\ket{D_?}$, but with the same label.

If one does not, however, coarse-grain the results $(1,1)$ and $(1,2)$ together, then this measurement procedure can be regarded as playing the simple game
\be 
G''' =  \begin{pmatrix} 
c_1d_1 & r_1 \\
c_1d_2 & r_1 \\
c_2d_1 & s_1 \\
c_2d_2 & s_2 \end{pmatrix}
\ee
instead. We postulate therefore that a rational agent in Many-Worlds should regard $G$ and $G'''$ as equivalent, or more formally that:
\begin{itemize}
 \item \textbf{Reduction} \textit{The sequential game
 \be G = \begin{pmatrix}
 \alpha_1 & r_1 \\
 \vdots & \vdots \\
 \alpha_n & \beta_1 & s_1 \\
& \vdots & \vdots \\
& \beta_m & s_m 
 \end{pmatrix}\ee
where subgame $(\boldsymbol{\beta},\mathbf{s})$ is played in the worlds with outcome $n$, has the same value as the simple game 
\be G' = \begin{pmatrix}
 \alpha_1\boldsymbol{\beta} & r_1 \\
 \vdots & \vdots \\
 \alpha_{n-1}\boldsymbol{\beta} & r_{n-1} \\ \alpha_n\beta_1 & s_1 \\
 \vdots & \vdots \\
 \alpha_n\beta_{m-1} & s_{m-1} \\
 \alpha_n\beta_m & s_m 
 \end{pmatrix}.
\ee}
\end{itemize}
This \textbf{Reduction} postulate suffices to prove \textbf{Substitution} as a theorem, as the value of the subgame $(\boldsymbol{\beta},\mathbf{s})$ is
\be
V(\boldsymbol{\beta},\mathbf{s}) = \frac{1}{\norm{\boldsymbol{\beta}}^p_p}\sum_{j=1}^m|\beta_j|^p,
\ee
and the value of $G$ is
\begin{gather}
V(G) = \frac{1}{\norm{\boldsymbol{\alpha}}^p_p\norm{\boldsymbol{\beta}}^p_p}\Big(\norm{\boldsymbol{\beta}}^p_p\sum_{i=1}^{n-1}|\alpha_i|^pr_i  + |\alpha_n|^p\sum_{j=1}^m|\beta_j|^p s_j\Big) \\
     = \frac{1}{\norm{\boldsymbol{\alpha}}^p_p}\Big(\sum_{i=1}^{n-1}|\alpha_i|^pr_i  + |\alpha_n|^p V(\boldsymbol{\beta},\mathbf{s})\Big),
\end{gather}
as required.
  
\end{document}